\documentclass[11pt,a4paper]{article}
\usepackage[margin=2.5cm]{geometry}
\usepackage{booktabs}
\usepackage{hyperref}
\usepackage{amsmath}
\usepackage{amssymb}
\usepackage{graphicx}
\usepackage{caption}
\usepackage{subcaption}
\usepackage{listings}
\usepackage{xcolor}
\usepackage[numbers]{natbib}
\usepackage{algorithm}
\usepackage{algpseudocode}
\usepackage{tikz}
\usetikzlibrary{decorations.pathreplacing,positioning}
\usepackage{amsthm}
\usepackage{parskip}
\usepackage{CJKutf8}

\theoremstyle{definition}
\newtheorem{definition}{Definition}
\theoremstyle{plain}
\newtheorem{lemma}{Lemma}
\newtheorem{theorem}{Theorem}
\newtheorem{proposition}{Proposition}

\definecolor{codegreen}{rgb}{0,0.6,0}
\definecolor{codegray}{rgb}{0.5,0.5,0.5}
\definecolor{codepurple}{rgb}{0.58,0,0.82}
\definecolor{backcolour}{rgb}{0.95,0.95,0.92}

\lstdefinestyle{mystyle}{
backgroundcolor=\color{backcolour},
commentstyle=\color{codegreen},
keywordstyle=\color{magenta},
numberstyle=\tiny\color{codegray},
stringstyle=\color{codepurple},
basicstyle=\ttfamily\footnotesize,
breakatwhitespace=false,
breaklines=true,
captionpos=b,
keepspaces=true,
numbers=left,
numbersep=5pt,
showspaces=false,
showstringspaces=false,
showtabs=false,
tabsize=2
}
\title{\textbf{The Li-Chao Tree: Algorithm Specification and Analysis}}
\author{Chao Li \\ \texttt{chnlich@live.com}}
\date{\today}

\begin{document}

\maketitle

\begin{abstract}
The Li-Chao tree (LICT) was first introduced in lecture materials~\cite{lichao2012} as an efficient data structure for dynamic lower envelope maintenance. In the years since, it has achieved widespread adoption within the competitive programming community, yet no formal specification has appeared in the peer-reviewed literature. This paper provides the definitive formalization of the Li-Chao tree, serving as both the official specification and an expansion of the original lecture materials~\cite{lichao2012}. We present complete algorithmic specifications, establish formal correctness proofs, analyze theoretical complexity, and provide empirical performance characterization. The LICT offers distinct advantages in implementation simplicity, numerical stability, and extensibility to advanced variants such as persistence and line segments.
\end{abstract}

\section{Introduction}

Dynamic Lower Envelope maintenance is a fundamental problem in computational geometry with extensive applications. We focus on the Li-Chao tree (LICT), a data structure that maintains a set of linear functions while supporting insertion and query operations. The problem is defined as follows: given a dynamic set of linear functions $y = kx + b$, support efficient insertion of new lines and querying the minimum (or equivalently, maximum) value at arbitrary $x$ coordinates. This report focuses on minimum queries; maximum queries are obtained by negating line parameters.

Formally, we require a data structure supporting two operations:
\begin{enumerate}
\item \textbf{Add Line:} Insert a new line $y = kx + b$ into the structure.
\item \textbf{Query:} Given $x_0$, compute
\[
\min_{i} \{k_i x_0 + b_i\}
\]
over all lines currently in the structure.
\end{enumerate}

The LICT provides $O(\log C)$ time per operation for minimum (or maximum) queries, where the complexity is parameterized by the coordinate universe size rather than the number of inserted lines.  Here,
\[
C = \frac{\text{coordinate range}}{\text{precision level}}
\]
represents the ratio of the coordinate range to the precision level.
For integer coordinates with range $[0, 10^9]$, $C = 10^9$; for the same range with precision $10^{-6}$, $C = 10^{15}$.

The structure also supports line segments (lines defined only on finite intervals $[x_l, x_r]$) in addition to infinite lines.

\section{Related Work}\label{sec:related}

Several approaches exist for Dynamic Lower Envelope maintenance, each with distinct trade-offs in terms of time complexity, implementation complexity, and applicability constraints. We review these solutions to establish the context for the LICT.

Dynamic maintenance of geometric configurations has been studied extensively in computational geometry.

\subsection{Overmars and van Leeuwen (1981)}

Overmars and van Leeuwen \cite{overmars1981} presented foundational work on dynamic convex hull maintenance. Their data structure supports insertion and deletion of lines while maintaining the lower envelope, enabling efficient querying of the minimum value at any point.

Their approach uses a balanced binary search tree to explicitly maintain the convex hull. Each node stores a line, and the tree is ordered by slope. Intersection points between adjacent lines are computed to determine the hull structure. Queries take $O(\log N)$ time, while insertions and deletions require $O(\log^2 N)$ time.

\subsection{Monotonic Convex Hull Trick}

The monotonic convex hull trick addresses the special case where line slopes are inserted in monotonically increasing or decreasing order~\cite{cpalgorithms_cht}.

When insertions arrive in order of monotonically increasing (or decreasing) slopes, a deque-based approach achieves $O(1)$ amortized time per insertion and $O(1)$ amortized time per query. This variant, widely used in dynamic programming optimization, maintains the convex hull incrementally without requiring balanced tree structures. However, the monotonicity restriction limits its applicability to problems where line slopes are known to follow a specific order.

\subsection{Dynamic Convex Hull Trick}

For arbitrary insertion sequences without monotonicity constraints, a balanced binary search tree maintains the hull explicitly~\cite{kactl}. Each insertion and query requires $O(\log N)$ amortized time. The implementation complexity stems from the need to compute and maintain intersection points between adjacent lines in the hull.

The Dynamic CHT requires computing intersection points between adjacent lines in the hull as
\[
x_{\text{intersect}} = \frac{b_2 - b_1}{k_1 - k_2},
\]
necessitating careful handling of precision (near parallel lines).

\section{The Li-Chao Tree}\label{sec:lichao}

This section presents the Li-Chao tree.

\subsection{Core Insight}

The Li-Chao tree is built upon a fundamental observation about line dominance over intervals. Consider two lines defined over an interval $[l, r]$:
\[
f_1(x) = k_1x + b_1 \quad \text{and} \quad f_2(x) = k_2x + b_2.
\]
Since two distinct lines intersect at most once, one line must be lower than the other on at least half of the interval. We store the line that wins at the midpoint $m = \frac{l+r}{2}$ and push the other line into the child that contains the intersection. The pushed line only requires consideration on at most half the original interval. This observation enables a recursive decomposition: either the query is answered by the stored line, or we recurse on a subproblem of strictly half the size.

This interval-halving property yields the logarithmic query time and forms the basis of the tree structure.

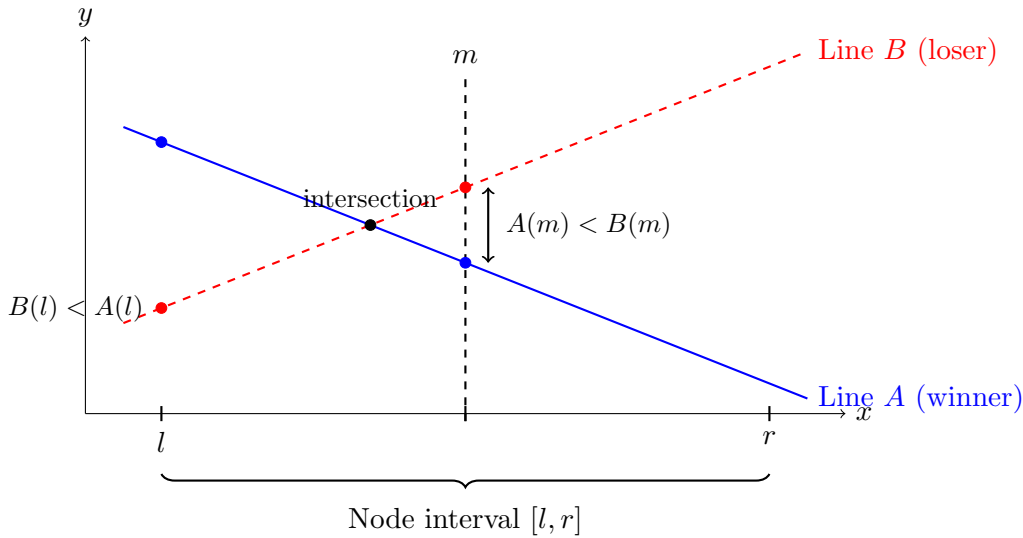
\begin{figure}[H]
\centering
\begin{tikzpicture}[scale=1.0]
\draw[->] (0,0) -- (10,0) node[right] {$x$};
\draw[->] (0,0) -- (0,5) node[above] {$y$};

\draw[thick] (1,0.1) -- (1,-0.1) node[below] {$l$};
\draw[thick] (9,0.1) -- (9,-0.1) node[below] {$r$};
\draw[thick, decorate, decoration={brace, amplitude=5pt, mirror}] (1,-0.8) -- (9,-0.8) node[midway, below=8pt] {Node interval $[l, r]$};

\draw[thick, dashed] (5,-0.1) -- (5,4.5) node[above] {$m$};
\draw[thick] (5,0.1) -- (5,-0.1);

\draw[thick, blue] (0.5,3.8) -- (9.5,0.2) node[right] {Line $A$ (winner)};
\draw[thick, red, dashed] (0.5,1.2) -- (9.5,4.8) node[right] {Line $B$ (loser)};

\filldraw[black] (3.75,2.5) circle (2pt);
\node[above] at (3.75,2.6) {\small intersection};

\filldraw[blue] (5,2.0) circle (2pt);
\filldraw[red] (5,3.0) circle (2pt);
\draw[thick, <->] (5.3,2.0) -- (5.3,3.0);
\node[right] at (5.4,2.5) {\small $A(m) < B(m)$};

\filldraw[blue] (1,3.6) circle (2pt);
\filldraw[red] (1,1.4) circle (2pt);
\node[left] at (0.9,1.4) {\small $B(l) < A(l)$};
\end{tikzpicture}
\caption{Interval Advantage Line Diagram. Line $A$ wins at the midpoint and is stored; Line $B$ (the loser) is routed to the left child where it may be optimal.}
\label{fig:interval-advantage}
\end{figure}

Note that this is \emph{local} optimality only. Lines routed down other branches may achieve lower values at the midpoint, so the stored line is not necessarily globally optimal at that coordinate.

\subsection{Algorithmic Approach}

\begin{definition}[Li Chao Tree]\label{def:lict}
A \emph{Li Chao Tree} over a domain $[L, R]$ is a binary tree where:
\begin{itemize}
  \item Each node owns an interval $[l, r] \subseteq [L, R]$ with midpoint $m = (l+r)/2$, and stores at most one line $f(x) = kx + b$.
  \item \textbf{Invariant.} The stored line achieves the minimum value at $m$ among all lines ever routed through that node.
  \item \textbf{Children.} Left child owns $[l, m]$; right child owns $[m, r]$. Children are created lazily on first insertion.
  \item \textbf{Initial state.} An empty tree is a single null root over $[L, R]$.
\end{itemize}
\end{definition}

The LICT offers a fundamentally different approach based on implicit envelope maintenance through interval subdivision. Rather than tracking the convex hull geometry explicitly, the structure maintains the best line at each interval, recursively partitioning the query range.

We first consider the simplest and most common setting: all queried $x$-coordinates are integers. In this case, if the coordinate range is $[0, C-1]$ for some positive integer $C$, the tree recursively bisects this range into subintervals $[l, r]$ where $l, r \in \mathbb{Z}$. Every leaf corresponds to a single integer, so the tree has depth $\lceil \log_2 C \rceil$ and every query lands at a unique leaf. This integer setting is the one most frequently encountered in competitive programming and in many algorithm design applications, and we use it throughout the examples below.

\textbf{Remark (non-integer queries).} The same structure handles queries at non-integer coordinates by refining the discretisation. If queries require precision $\epsilon$ (e.g., $\epsilon = 10^{-6}$), one can rescale the coordinate range by $1/\epsilon$, effectively treating each precision step as one unit. The coordinate universe size becomes $C = \text{coordinate range} / \epsilon$, and the tree depth grows to $\lceil \log_2 C \rceil$ accordingly. All algorithmic details remain identical; only the universe size changes. For example, a range of $[0, 10^9]$ with precision $10^{-6}$ yields $C = 10^{15}$ and tree depth $\approx 50$.

Each node in the tree represents an interval $[l, r]$ and stores one line.

\paragraph{Insertion.} A line is \emph{routed} through the tree following a single path from root to leaf. If the current node is empty, the new line is stored directly and routing terminates. Otherwise, two candidate lines compete at the node: the resident (currently stored) line and the incoming (new) line. The line that achieves the lower value at the midpoint
\[
m = \frac{l+r}{2}
\]
is designated the \emph{winner} and stored at the node; the \emph{loser} continues downward to the appropriate child subtree.

Since the winner is already determined at $m$, the remaining question is which half contains the intersection. If the loser is lower at $l$ (ordering at $l$ differs from $m$), the intersection lies in $[l, m]$, so the loser is routed to the left child. Otherwise, the intersection lies in $[m, r]$, and the loser is routed right.

When the insertion reaches maximum depth ($r - l < \epsilon$, where $\epsilon$ is the required precision), the interval is too narrow to distinguish any two query points. The stored line already resolves every query in that interval, so the losing line is simply dropped: no further recursion is needed. For integer coordinates ($\epsilon = 1$), this condition simplifies to $l = r$.

\begin{algorithm}[H]
\caption{LICT Line Insertion}
\label{alg:insert}
\begin{algorithmic}[1]
\Require Node pointer $node$, new line $new\_line$, interval $[l, r]$
\If{$node$ is null}
\State $node \gets \text{new Node}(new\_line)$
\State \Return
\EndIf
\State $m \gets l + (r - l) / 2$
\State $lef \gets new\_line.eval(l) < node.line.eval(l)$
\State $midf \gets new\_line.eval(m) < node.line.eval(m)$
\If{$midf$}
\State swap($node.line$, $new\_line$)
\EndIf
\If{$r - l < \epsilon$} \Comment{Max depth reached; drop the losing line}
\State \Return
\EndIf
\If{$lef \neq midf$}
\State \Call{Insert}{$node.left$, $new\_line$, $l$, $m$}
\Else
\State \Call{Insert}{$node.right$, $new\_line$, $m$, $r$}
\EndIf
\end{algorithmic}
\end{algorithm}

\textbf{Remark (tie-breaking).} When both lines achieve the same value at $m$, the strict $<$ comparison on line~8 keeps the resident line and pushes the incoming line down. This convention is arbitrary and preserves correctness. A duplicate line (identical $k$ and $b$) is a special case: whenever it meets an identical resident line, no swap occurs and the duplicate is routed to one child. The process then terminates either at the first null child (where the duplicate is stored) or at maximum depth (where it is dropped). In both cases, query answers are unchanged.

\begin{figure}[H]
\centering
\begin{tikzpicture}[scale=0.9, every node/.style={font=\small}]
\node[draw, rectangle, fill=blue!10, minimum width=2.2cm] (root) at (0,0) {$[0, 8]$: $f_0$};
\node[below=0.1cm of root] {\scriptsize $m=4$};
\node[draw, rectangle, fill=blue!10, minimum width=2.2cm] (l1) at (-2.5,-1.8) {$[0, 4]$: $f_1$};
\node[below=0.1cm of l1] {\scriptsize $m=2$};
\node[draw, rectangle, fill=blue!10, minimum width=2.2cm] (r1) at (2.5,-1.8) {$[4, 8]$: $f_2$};
\node[below=0.1cm of r1] {\scriptsize $m=6$};
\node[draw, rectangle, minimum width=1.4cm] (l2l) at (-4.0,-3.6) {$[0, 2]: -$};
\node[draw, rectangle, minimum width=1.4cm] (l2r) at (-1.0,-3.6) {$[2, 4]: -$};
\node[draw, rectangle, minimum width=1.4cm] (r2l) at (1.0,-3.6) {$[4, 6]: -$};
\node[draw, rectangle, minimum width=1.4cm] (r2r) at (4.0,-3.6) {$[6, 8]: -$};
\draw[->, thick] (root) -- (l1); \draw[->, thick] (root) -- (r1);
\draw[->] (l1) -- (l2l); \draw[->] (l1) -- (l2r);
\draw[->] (r1) -- (r2l); \draw[->] (r1) -- (r2r);
\end{tikzpicture}
\caption{Tree structure before inserting $f_{new}(x) = -x + 10$. Currently stores $f_0(x)=x$ at root, $f_1(x)=2x-4$ at $[0,4]$, and $f_2(x)=0.5x+2$ at $[4,8]$.}
\label{fig:insertion-example-before}
\end{figure}
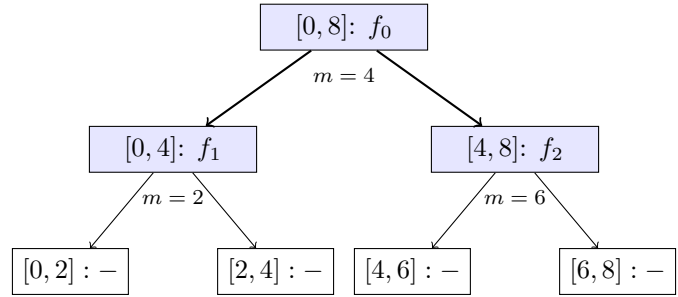

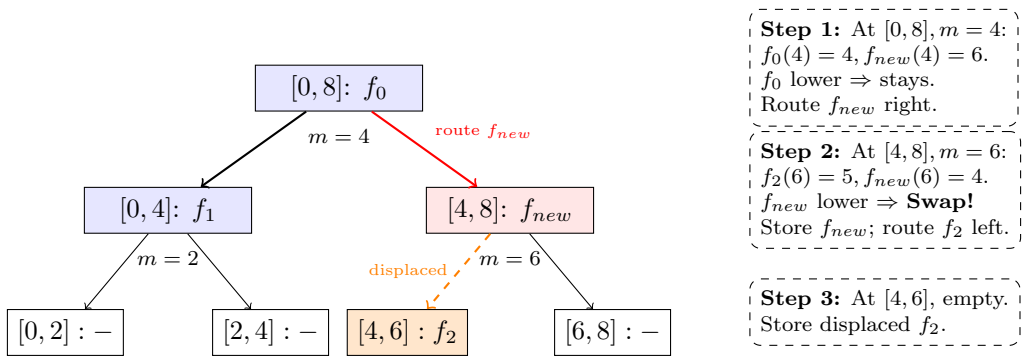
\begin{figure}[H]
\centering
\begin{tikzpicture}[scale=0.9, every node/.style={font=\small}]
\node[draw, rectangle, fill=blue!10, minimum width=2.2cm] (root) at (0,0) {$[0, 8]$: $f_0$};
\node[below=0.1cm of root] {\scriptsize $m=4$};
\node[draw, rectangle, fill=blue!10, minimum width=2.2cm] (l1) at (-2.5,-1.8) {$[0, 4]$: $f_1$};
\node[below=0.1cm of l1] {\scriptsize $m=2$};
\node[draw, rectangle, fill=red!10, minimum width=2.2cm] (r1) at (2.5,-1.8) {$[4, 8]$: $f_{new}$};
\node[below=0.1cm of r1] {\scriptsize $m=6$};
\node[draw, rectangle, minimum width=1.4cm] (l2l) at (-4.0,-3.6) {$[0, 2]: -$};
\node[draw, rectangle, minimum width=1.4cm] (l2r) at (-1.0,-3.6) {$[2, 4]: -$};
\node[draw, rectangle, fill=orange!20, minimum width=1.4cm] (r2l) at (1.0,-3.6) {$[4, 6]: f_2$};
\node[draw, rectangle, minimum width=1.4cm] (r2r) at (4.0,-3.6) {$[6, 8]: -$};
\draw[->, thick] (root) -- (l1);
\draw[->, thick, red] (root) -- (r1) node[midway, above right, font=\tiny] {route $f_{new}$};
\draw[->] (l1) -- (l2l); \draw[->] (l1) -- (l2r);
\draw[->, thick, dashed, orange] (r1) -- (r2l) node[midway, left, font=\tiny] {displaced};
\draw[->] (r1) -- (r2r);

\begin{scope}[xshift=6.0cm]
\node[draw, dashed, rounded corners, inner sep=4pt, align=left, anchor=west, fill=white, font=\scriptsize] at (0, 0.3) {
\textbf{Step 1:} At $[0,8], m=4$:\\
$f_0(4)=4, f_{new}(4)=6$.\\
$f_0$ lower $\Rightarrow$ stays.\\
Route $f_{new}$ right.
};
\node[draw, dashed, rounded corners, inner sep=4pt, align=left, anchor=west, fill=white, font=\scriptsize] at (0, -1.5) {
\textbf{Step 2:} At $[4,8], m=6$:\\
$f_2(6)=5, f_{new}(6)=4$.\\
$f_{new}$ lower $\Rightarrow$ \textbf{Swap!}\\
Store $f_{new}$; route $f_2$ left.
};
\node[draw, dashed, rounded corners, inner sep=4pt, align=left, anchor=west, fill=white, font=\scriptsize] at (0, -3.3) {
\textbf{Step 3:} At $[4,6]$, empty.\\
Store displaced $f_2$.
};
\end{scope}
\end{tikzpicture}
\caption{Tree structure after inserting $f_{new}(x) = -x + 10$. At $[4,8]$, $f_{new}(6)=4 < f_2(6)=5$, so $f_{new}$ swaps in and $f_2$ is displaced to $[4,6]$.}
\label{fig:insertion-example-after}
\end{figure}

\paragraph{Query.} We traverse the path to $x_0$, evaluating all stored lines along the path. The minimum value encountered equals the lower envelope at $x_0$ because any line that could be optimal at $x_0$ is stored on this path.

\begin{algorithm}[H]
\caption{LICT Query}
\label{alg:query}
\begin{algorithmic}[1]
\Require Node pointer $node$, query coordinate $x$, interval $[l, r]$
\Ensure Minimum value at coordinate $x$
\If{$node$ is null}
\State \Return $+\infty$
\EndIf
\State $m \gets l + (r - l) / 2$
\State $val \gets node.line.eval(x)$
\If{$r - l < \epsilon$}
\State \Return $val$
\EndIf
\If{$x \leq m$}
\State \Return $\min(val, \text{\Call{Query}{$node.left$, $x$, $l$, $m$}})$
\Else
\State \Return $\min(val, \text{\Call{Query}{$node.right$, $x$, $m$, $r$}})$
\EndIf
\end{algorithmic}
\end{algorithm}

\begin{figure}[H]
\centering
\begin{tikzpicture}[scale=0.8, every node/.style={font=\small}]
\node[draw, rectangle, fill=blue!10, minimum width=2.2cm] (root) at (0,0) {$[0, 8]$: $f_0$};
\node[below=0.1cm of root] {\scriptsize $m=4$};
\node[draw, rectangle, fill=blue!10, minimum width=2.2cm] (l1) at (-3.2,-2.0) {$[0, 4]$: $f_1$};
\node[below=0.1cm of l1] {\scriptsize $m=2$};
\node[draw, rectangle, fill=blue!10, minimum width=2.2cm] (r1) at (3.2,-2.0) {$[4, 8]$: $f_{new}$};
\node[below=0.1cm of r1] {\scriptsize $m=6$};
\node[draw, rectangle, minimum width=1.4cm] (l2ll) at (-5.2,-4.5) {$[0, 2]: -$};
\node[draw, rectangle, minimum width=1.4cm] (l2lr) at (-1.6,-4.5) {$[2, 4]: -$};
\node[draw, rectangle, fill=green!20, minimum width=1.4cm] (l2rl) at (1.6,-4.5) {$[4, 6]: f_2$};
\node[draw, rectangle, minimum width=1.4cm] (l2rr) at (5.2,-4.5) {$[6, 8]: -$};

\draw[->] (root) -- (l1);
\draw[->] (root) -- (r1);
\draw[->] (l1) -- (l2ll);
\draw[->] (l1) -- (l2lr);
\draw[->] (r1) -- (l2rl);
\draw[->] (r1) -- (l2rr);

\draw[->, very thick, blue] (root) -- (r1) node[midway, right, font=\tiny] {$5 > 4$};
\draw[->, very thick, blue] (r1) -- (l2rl) node[midway, left, font=\tiny] {$5 \le 6$};

\node[draw, thick, fill=white, rounded corners, align=left, font=\scriptsize, anchor=west] at (6.2, -1.8) {
\textbf{Query Result for $x=5$:}\\
1. Eval $f_0(5) = 5$\\
2. Eval $f_{new}(5) = 5$\\
3. Eval $f_2(5) = 4.5$\\
\textbf{Minimum:} $4.5$
};
\node[below=0.2cm of l2rl, red, font=\scriptsize\bfseries] {Target $x=5$};
\end{tikzpicture}
\caption{Query Tree View. To query at $x=5$, we follow the path $[0,8] \to [4,8] \to [4,6]$. We evaluate each stored line along the path ($f_0$, $f_{new}$, and $f_2$) and return the minimum.}
\label{fig:query-example}
\end{figure}
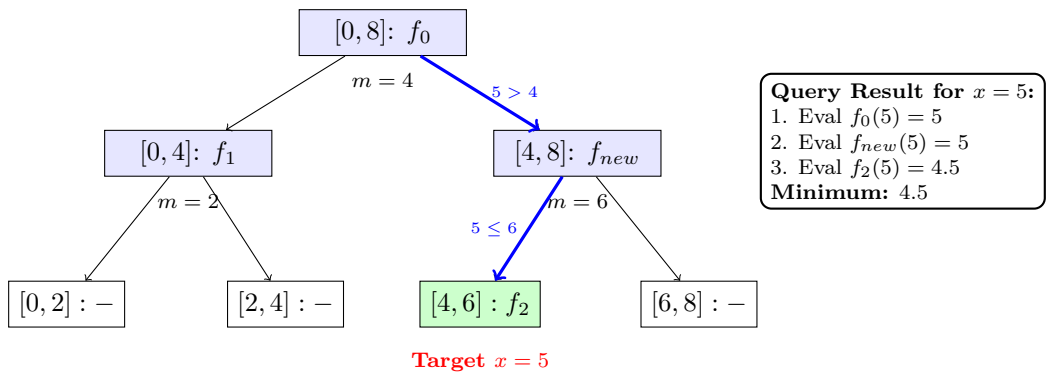

\subsection{Line Segment Extension}

The LICT extends naturally from full lines to line segments. To insert a line segment defined on $[x_l, x_r]$, we decompose this interval into $O(\log C)$ \emph{canonical intervals} of the implicit segment tree. Each canonical interval corresponds to a node whose range is entirely contained within $[x_l, x_r]$. Algorithm~\ref{alg:insert} is then called independently on each such interval to insert the line restricted to that subrange.

\begin{algorithm}[H]
\caption{LICT Segment Insertion}
\label{alg:segment-insert}
\begin{algorithmic}[1]
\Require Node pointer $node$, line $line$, node interval $[l, r]$, segment bounds $[x_l, x_r]$
\If{$x_r < l$ \textbf{or} $r < x_l$}
\State \Return \Comment{No overlap}
\EndIf
\If{$x_l \leq l$ \textbf{and} $r \leq x_r$}
\State \Call{Insert}{$node$, $line$, $l$, $r$} \Comment{Algorithm~\ref{alg:insert}}
\State \Return
\EndIf
\State $m \gets l + (r - l) / 2$
\State \Call{SegmentInsert}{$node.left$, $line$, $l$, $m$, $x_l$, $x_r$}
\State \Call{SegmentInsert}{$node.right$, $line$, $m$, $r$, $x_l$, $x_r$}
\end{algorithmic}
\end{algorithm}

The total cost per segment insertion is $O(\log^2 C)$: the interval $[x_l, x_r]$ decomposes into $O(\log C)$ canonical intervals, and each call to Algorithm~\ref{alg:insert} takes $O(\log C)$ time.

\subsection{Theoretical Analysis}

We now present a formal analysis of the LICT's correctness and complexity.

\begin{definition}[Coordinate Universe]
The LICT operates on a discrete domain where
\[
C = \frac{\text{coordinate range}}{\text{precision level}}.
\]
The tree depth is $h = \lceil \log_2 C \rceil$.
\end{definition}

\subsubsection{Correctness}

We establish the correctness of the LICT through three results: a foundational property of linear functions, a routing invariant maintained by insertion, and a global correctness theorem for queries.

\begin{lemma}[Linear Domination]\label{lem:linear-dom}
Let $f(x) = k_1 x + b_1$ and $g(x) = k_2 x + b_2$ be linear functions. If $f(a) \leq g(a)$ and $f(b) \leq g(b)$ for $a \leq b$, then $f(x) \leq g(x)$ for all $x \in [a, b]$.
\end{lemma}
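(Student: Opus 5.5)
The plan is to reduce the statement to a property of a single affine function. Set $h(x) = g(x) - f(x) = (k_2 - k_1)x + (b_2 - b_1)$. Then $h$ is again of the form $kx + b$, and the hypotheses become $h(a) \geq 0$ and $h(b) \geq 0$. The goal is to show $h(x) \geq 0$ for every $x \in [a,b]$.

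First I will dispose of the degenerate case $a = b$. Here the interval $[a,b]$ is the single point $a$, so the claim is exactly the hypothesis $f(a) \leq g(a)$.

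For $a < b$, write any $x \in [a,b]$ as a convex combination $x = (1-t)a + tb$ with $t = (x-a)/(b-a) \in [0,1]$. Because $h$ is affine, it respects convex combinations:
\[
h(x) = (1-t)\,h(a) + t\,h(b).
\]
This identity is checked by direct expansion: the slope terms give $(k_2-k_1)\bigl((1-t)a + tb\bigr)$, and the constant term $b_2 - b_1$ is multiplied by $(1-t) + t = 1$. Both coefficients $1-t$ and $t$ are nonnegative, and so are $h(a)$ and $h(b)$. Hence $h(x) \geq 0$, which is precisely $f(x) \leq g(x)$.

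There is no real obstacle; the only step needing any care is verifying the affine interpolation identity. An alternative I could use instead is to split on the sign of $k_2 - k_1$. In that approach $h$ is monotone, so its minimum over $[a,b]$ is attained at an endpoint, and both endpoint values are nonnegative. I prefer the convex-combination argument because it avoids case analysis on the slope and does not need the parallel case $k_1 = k_2$ treated separately.
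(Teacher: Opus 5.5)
Your proof is correct and is essentially the paper's argument: the paper also writes the difference $h$ as the convex interpolation $h(x) = h(a)\frac{b-x}{b-a} + h(b)\frac{x-a}{b-a}$ of its endpoint values and reads off the sign. Your separate treatment of $a = b$ is a small improvement, since the paper's formula divides by $b - a$ and silently assumes $a < b$.
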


\begin{proof}
Let $h(x) = f(x) - g(x)$. Since $h$ is linear with $h(a) \leq 0$ and $h(b) \leq 0$, for any $x \in [a, b]$:
\[
h(x) = h(a) \cdot \frac{b - x}{b - a} + h(b) \cdot \frac{x - a}{b - a} \leq 0. \qedhere
\]
\end{proof}

\begin{lemma}[Routing Dominance]\label{lem:routing-dom}
At a node with interval $[l, r]$ and midpoint $m$, if line $A$ is kept ($A(m) \leq B(m)$) and line $B$ is pushed to one child, then $A$ dominates $B$ on the opposite child's interval.
\end{lemma}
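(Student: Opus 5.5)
The plan is to read the statement through the routing rule of Algorithm~\ref{alg:insert}: after the possible swap at line~8, $A$ is the stored (winning) line, $B$ is the loser, and $B$ goes left exactly when the ordering at $l$ differs from the ordering at $m$. In other words, $B$ goes left exactly when $B(l) < A(l)$. We reduce to Lemma~\ref{lem:linear-dom} by showing that $A$ is at most $B$ at both endpoints of the opposite child's interval. There are two cases.

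\textbf{Case 1: $B$ is pushed left.} The opposite child owns $[m,r]$, and we already have $A(m) \le B(m)$. We need $A(r) \le B(r)$. Put $h = A - B$, a linear (affine) function. We have $h(l) > 0$ and $h(m) \le 0$ with $l < m < r$, so $h$ is non-constant with negative slope. Hence $h(r) \le h(m) \le 0$. Lemma~\ref{lem:linear-dom} on $[m,r]$ then gives $A \le B$ throughout.

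\textbf{Case 2: $B$ is pushed right.} The opposite child owns $[l,m]$. The routing condition says the orderings at $l$ and $m$ agree, so $A(l) \le B(l)$ and $A(m) \le B(m)$. Lemma~\ref{lem:linear-dom} on $[l,m]$ finishes immediately.

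The main obstacle is bookkeeping rather than mathematics. The flags $lef$ and $midf$ in the algorithm are computed for the \emph{incoming} line against the \emph{resident} line before the swap, so I must first check that ``$lef \neq midf$'' means ``the loser is strictly lower at $l$'' regardless of which line won. This needs a short case check on whether a swap happened. Ties need care too: strict $<$ in both flags means that after the swap the loser $B$ satisfies $B(l)<A(l)$ iff the flags differ. I will verify this by enumerating the four combinations of $(lef, midf)$.

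The degenerate case $r-l<\epsilon$, where $B$ is dropped, is outside the statement since $B$ is pushed nowhere.
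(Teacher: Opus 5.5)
Your route is essentially the paper's, and in one respect more complete. The paper treats only the left push and calls the right push ``without loss of generality''. The routing test is made at $l$, not at $r$, so the two cases are not mirror images. Your Case~2, where both endpoints of $[l,m]$ already satisfy $A \le B$ and Lemma~\ref{lem:linear-dom} applies directly, is the right way to close that half. Your slope argument in Case~1 is a cleaner form of the paper's sign argument.

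The step that fails is the bookkeeping claim you intend to verify: that after the swap the flags differ exactly when $B(l) < A(l)$. In the combination $lef = \mathrm{false}$, $midf = \mathrm{true}$, the two lines may agree at $l$. Take resident $y = 0$ and incoming $y = -x$ on $[0,8]$, $m = 4$. The test $0 < 0$ fails and $-4 < 0$ holds, so the lines swap, $A(x) = -x$ is stored, and $B(x) = 0$ is pushed \emph{left} even though $B(0) = A(0)$. Your enumeration will therefore not confirm the biconditional, and the hypothesis $h(l) > 0$ of your Case~1 is unavailable there. The paper's ``pushed left, so $B(l) < A(l)$'' has the same hole.

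The repair is to record what each push actually guarantees, with $h = A - B$:
\begin{itemize}
\item A left push arises from $(lef, midf) = (\mathrm{true}, \mathrm{false})$, giving $h(l) > 0 \ge h(m)$. It can also arise from $(\mathrm{false}, \mathrm{true})$, giving $h(l) \ge 0 > h(m)$ because the swap at $m$ was strict. Either way $h(l) > h(m)$. This forces $l < m$ and a negative slope, so $h(r) \le h(m) \le 0$, and your Case~1 goes through unchanged with this weaker hypothesis.
\item A right push arises from $(\mathrm{true},\mathrm{true})$ or $(\mathrm{false},\mathrm{false})$. Both give $h(l) \le 0$ and $h(m) \le 0$, which is exactly what Case~2 uses.
\end{itemize}
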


\begin{proof}
Without loss of generality, suppose $B$ is pushed to the left child, so $B(l) < A(l)$. Since $A(m) \leq B(m)$ and $B(l) < A(l)$, the two lines intersect in $[l, m]$. For all $x \in [m, r]$, the difference $A(x) - B(x)$ has the same sign as at $m$, so $A(x) \leq B(x)$. By Lemma~\ref{lem:linear-dom}, $A$ dominates $B$ on $[m, r]$.
\end{proof}

\begin{theorem}[Query Correctness]\label{thm:query-correct}
For any query point $x_0$, the value returned by the query operation equals $\min_{i} \{k_i x_0 + b_i\}$ over all lines in the structure.
\end{theorem}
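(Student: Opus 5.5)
Let $P(x_0)$ be the root-to-leaf path followed by Algorithm~\ref{alg:query}. The plan is an induction on the number of insertions, with the invariant
\[
(\star)\qquad \text{for every inserted line } g \text{ and every query point } x_0,\ \min_{v \in P(x_0)} f_v(x_0) \le g(x_0),
\]
where $f_v$ denotes the line stored at node $v$; the minimum over an empty path is $+\infty$. The opposite inequality is immediate: every stored line was inserted at some point and is never modified, only moved, so the returned value is at least $\min_i\{k_i x_0 + b_i\}$. Hence $(\star)$ gives equality.

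For the inductive step, consider one call of Algorithm~\ref{alg:insert} with new line $g$. I will track the \emph{carried} line, which is the line still being pushed down. The key sub-claim is that at each node $v$ on the insertion path, after the comparison at $v$, every line that was previously stored at $v$ or is being carried into $v$ is dominated, on the part of $v$'s interval not covered by the child it is routed to, by the line now stored at $v$. Concretely, let $A$ be the winner at $v$ and $B$ the loser.

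\begin{itemize}
\item If $B$ goes to the left child, then Lemma~\ref{lem:routing-dom} gives $A \le B$ on $[m,r]$. Every query point $x_0 > m$ (the query routes right exactly when $x_0 > m$) passes through $v$ but not through the left child, so its path still sees $A$, which beats $B$ there.
\item If $B$ goes right, the symmetric argument applies on $[l,m]$, which contains exactly the points routed left.
\end{itemize}

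The winner $A$ stays at $v$, so every path through $v$ still sees it. A line that is displaced from $v$ was previously witnessed for the points of $v$'s subtree only via $v$ or deeper nodes. After the move, these points are either covered by $A$ (opposite half) or continue into the child subtree that now receives $B$.

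Now formalize $(\star)$ as the statement \emph{for each node $v$ and each $x_0$ in $v$'s interval that routes through $v$, the minimum over the portion of $P(x_0)$ at or below $v$ is at most $g(x_0)$ for every line $g$ ever routed through $v$}. Insertion only alters nodes on one path, so the statement is preserved at untouched nodes. On the touched path, I prove it bottom-up, from the node where routing ends up to the root:

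\begin{itemize}
\item \textbf{Base cases.} If the routing terminates at a null node, the carried line is stored there, and the claim holds trivially. If it terminates at maximum depth, the loser $B$ is dropped. Here the leaf's interval contains a single query point $m$ (in the integer case $l=r$), and $A(m)\le B(m)$ suffices.
\item \textbf{Inductive step.} At a touched node, the lines routed through $v$ are the old set plus $g$. Any line now residing in the chosen child's subtree is handled by the induction for that child. The lines on the opposite side are handled by Lemma~\ref{lem:routing-dom} and Lemma~\ref{lem:linear-dom} through the winner $A$.
\end{itemize}

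One subtlety is that an old line $h$ which previously contributed via $v$ may be the one displaced, so that $h = B$. Then $h$ is covered on the opposite half by $A$ and on the routed half by induction at the child. The other old lines in the subtree are unchanged except along the path, which is covered by the same induction. Applying the statement at the root gives $(\star)$.

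The main obstacle is the base case at maximum depth in the non-integer setting, where the interval $[l,r]$ with $r-l<\epsilon$ is not a single point. There I would invoke the discretisation convention of the Remark: queries occur only at grid points, and the leaf's interval contains only the single relevant grid point handled by the stored winner. A second delicate point is the boundary $x_0 = m$. The query goes left, while Lemma~\ref{lem:routing-dom} for a right-routed loser gives domination on $[l,m]$, which includes $m$. For a left-routed loser, domination at $m$ holds directly from $A(m)\le B(m)$. So both boundary cases close.
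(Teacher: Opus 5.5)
Your proof is correct, but it is organized differently from the paper's.

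\textbf{How the two arguments differ.} The paper fixes $x_0$ and argues by contradiction. It follows a global minimizer $L^*$ until $L^*$ is pushed off the query path $P$ at a node $w$. It then tracks a chain of witnesses at $w$ forward in time: the kept line, then whatever displaces it, and so on, applying Lemma~\ref{lem:routing-dom} at each displacement. You instead strengthen the claim to a per-subtree invariant, quantified over all lines ever routed through a node. You then prove it by induction on insertions, bottom-up along the insertion path. The local mechanism is the same in both: a displaced witness is either dominated by the winner on the opposite half, or re-covered inside the child that receives it.

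\textbf{What each approach buys.} Your version has three advantages. It needs no informal ``repeat until the chain terminates'' step. It explicitly handles lines dropped at maximum depth, a case the paper's proof never addresses, since it only considers $L^*$ being pushed to the child not containing $x_0$. And its subtree-local form is exactly what the proof of Lemma~\ref{lem:segment-correct} tacitly relies on when it applies the theorem ``to the subtree rooted at $[l_i,r_i]$''. The paper's version is shorter and more intuitive.

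\textbf{Two points to tighten in a write-up.}

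(i) State the monotonicity step outright. Every line stored in the old subtree of the receiving child $w$ was routed through $w$. Hence the inductive hypothesis at $w$ gives that the new path minimum below $w$ is at most the old one. This is what carries the old guarantees over to lines of $v$ that were never routed through $w$. Your phrase ``any line now residing in the chosen child's subtree'' should be replaced by this reasoning about lines routed through $w$.

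(ii) Your max-depth base case is sound only after rescaling to integers, where every leaf has $l=r=m$. With real midpoints and $r-l<\epsilon$, the grid point inside a leaf need not be $m$, and then $A(m)\le B(m)$ does not control it. The paper's proof relies implicitly on the same convention.
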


\begin{proof}
The query traverses the unique root-to-leaf path $P$ containing $x_0$, returning the minimum of the stored lines' values at $x_0$. Since these are a subset of all inserted lines, $\text{query}(x_0) \geq \min_i L_i(x_0)$. It remains to show the reverse inequality.

Suppose for contradiction that $\text{query}(x_0) > \min_i L_i(x_0)$. Let $L^*$ be a line achieving the global minimum at $x_0$. Then no line stored on $P$ has value $\leq L^*(x_0)$ at $x_0$; in particular, $L^*$ is not stored on any node of $P$.

Since $L^*$ enters at the root (on $P$) and is not stored on $P$ in the final tree, at some node $w \in P$, $L^*$ was pushed to the child not containing $x_0$. By Lemma~\ref{lem:routing-dom}, the line $S$ kept at $w$ satisfies $S(x_0) \leq L^*(x_0)$, contradicting the assumption.

If $S$ is later displaced from $w$ by a new line $C$:
\begin{itemize}
\item If $S$ is routed toward $x_0$: $S$ moves deeper on $P$, preserving the contradiction.
\item If $S$ is routed away from $x_0$: by Lemma~\ref{lem:routing-dom}, $C(x_0) \leq S(x_0) \leq L^*(x_0)$, and $C$ is at $w \in P$. If $C$ is also displaced, repeat.
\end{itemize}
Since displacements are finite, the chain terminates with a contradiction.
\end{proof}

\begin{lemma}[Segment Insertion Correctness]\label{lem:segment-correct}
Algorithm~\ref{alg:segment-insert} correctly maintains the lower envelope for a line segment defined on $[x_l, x_r]$.
\end{lemma}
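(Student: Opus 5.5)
The plan is to make precise what ``correctly maintains the lower envelope'' means for segments, and then reduce to Theorem~\ref{thm:query-correct} applied node by node. The target statement is: after any sequence of full-line and segment insertions, for every query point $x_0$, Algorithm~\ref{alg:query} returns $\min\{ L_i(x_0) : x_0 \in \mathrm{dom}(L_i)\}$. Here a full line has domain $[L,R]$, and a segment has domain $[x_l,x_r]$.

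\textbf{Step 1: canonical decomposition.} First I show that the nodes $v$ at which Algorithm~\ref{alg:segment-insert} calls Algorithm~\ref{alg:insert} have intervals $I_v$ that are pairwise disjoint on the query grid and together cover exactly $[x_l,x_r]$. I would argue this by induction on the recursion. A node is either discarded because it has no overlap, or accepted because $I_v\subseteq[x_l,x_r]$, or split into its two children, which partition $I_v$ on the grid. Because query routing sends $x\le m$ left and $x>m$ right, each grid point $x_0$ belongs to exactly one child. Consequently:
\begin{itemize}
\item every $x_0\in[x_l,x_r]$ has exactly one canonical node $v$ on its root-to-leaf path $P(x_0)$;
\item no $x_0\notin[x_l,x_r]$ has any canonical node on $P(x_0)$.
\end{itemize}

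\textbf{Step 2: local invariant.} A line inserted at a node $v$ only ever descends inside the subtree of $v$, so it can only be stored at descendants of $v$. I would re-run the argument of Theorem~\ref{thm:query-correct} with $v$ in place of the root. Take the path $P$ from $v$ to the leaf of a point $x_0\in I_v$, and let $L^*$ be the line minimising $L(x_0)$ among lines that entered $v$'s subtree at $v$ or at an ancestor. The argument needs only two ingredients:
\begin{itemize}
\item Lemma~\ref{lem:routing-dom}, which is purely local to a node's interval;
\item the fact that a displaced line descends rather than leaving the subtree.
\end{itemize}
From these, some line stored on $P\cap \mathrm{subtree}(v)$ has value $\le L^*(x_0)$ at $x_0$.

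\textbf{Step 3: combine.} The query at $x_0$ evaluates every line stored on $P(x_0)$. By Step 1, a segment contributes lines to $P(x_0)$ only when $x_0$ lies in its domain. This gives the upper inequality: the query value is at least the true minimum over segments whose domain contains $x_0$. For the other inequality, take the optimal segment $L^*$ and its canonical node $v\in P(x_0)$. Step 2 then shows that some line stored on $P(x_0)$ below $v$ has value at most $L^*(x_0)$.

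\textbf{Main obstacle.} The delicate point is soundness in Step 3. A segment line stored at a descendant of its canonical node must never be evaluated at a point outside its domain. This is exactly why I need the subtree-confinement claim in Step 2: Algorithm~\ref{alg:insert} recurses only into children of the current node. Combined with $I_v\subseteq[x_l,x_r]$, it guarantees that every stored copy sits at a node whose interval lies inside the domain.

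The second subtlety is the shared endpoint $m$ between siblings in the interval notation. I would handle it by fixing the convention that the right child effectively owns $(m,r]$ on the grid, consistent with Algorithm~\ref{alg:query}. With that convention, the boundary cases of Algorithm~\ref{alg:segment-insert} are consistent with it.
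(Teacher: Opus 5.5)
Your proposal is correct and follows the same route as the paper: decompose $[x_l,x_r]$ into canonical nodes, run the argument of Theorem~\ref{thm:query-correct} inside each canonical subtree, and combine using exact coverage together with the fact that an insertion never leaves the subtree where it starts. Your version is more careful than the paper's one-line appeal to Theorem~\ref{thm:query-correct} in three ways. You state the target with line domains made explicit, and you separate soundness (via confinement) from completeness. You also re-run the witness-chain argument locally instead of citing the theorem, which is genuinely needed: a canonical subtree also receives lines routed from ancestors and lines inserted at deeper canonical nodes, so it is not literally a standalone tree.
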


\begin{proof}
The interval $[x_l, x_r]$ decomposes into $O(\log C)$ canonical intervals of the implicit segment tree, each corresponding to a node whose range lies entirely within $[x_l, x_r]$. For each canonical interval $[l_i, r_i]$, Algorithm~\ref{alg:insert} inserts the line restricted to $[l_i, r_i]$. By Theorem~\ref{thm:query-correct} applied to the subtree rooted at $[l_i, r_i]$, each such insertion correctly maintains the lower envelope within $[l_i, r_i]$. Since the canonical intervals cover $[x_l, x_r]$ exactly and no insertion affects nodes outside $[x_l, x_r]$, the overall lower envelope is correctly maintained.
\end{proof}

\subsubsection{Complexity Analysis}

\begin{proposition}[Time Complexity]\label{prop:time-complexity}
Line insertion and query operations require $O(\log C)$ time. Line segment insertion (Algorithm~\ref{alg:segment-insert}) requires $O(\log^2 C)$ time.
\end{proposition}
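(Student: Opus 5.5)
The plan is to bound each operation by the length of the root-to-leaf path it follows, times the constant work done per node, and then to handle segment insertion by counting canonical intervals in the implicit segment tree. Throughout, I use that every recursive call in Algorithms~\ref{alg:insert} and~\ref{alg:query} passes from an interval $[l,r]$ to one of its halves $[l,m]$ or $[m,r]$, with $m = l + (r-l)/2$. In units of the precision level, the root interval has length $C$, so after $d$ levels the length is at most $\lceil C/2^d \rceil$. The termination test $r - l < \epsilon$ therefore fires by depth $h = \lceil \log_2 C \rceil$. I would state this first as a small depth lemma, with a brief remark for the integer case: there the halving is floor/ceiling, and the depth bound still holds by induction on $r-l$.

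For \textbf{line insertion}, Algorithm~\ref{alg:insert} does a constant amount of work at each node. It evaluates two lines at two points, makes at most one swap and one comparison, and may allocate a node. It then makes at most one recursive call. So the recursion is a single path of length at most $h+1$, and the cost is $O(\log C)$. Note that the displaced line simply continues down the same path; the swap does not branch the recursion. For \textbf{query}, Algorithm~\ref{alg:query} likewise makes one line evaluation and at most one recursive call per level, stopping at a null node or at maximum depth. This gives $O(h) = O(\log C)$.

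For \textbf{segment insertion}, I would argue as in the standard segment-tree range decomposition. At each depth of the recursion of Algorithm~\ref{alg:segment-insert}, at most two visited nodes are \emph{partially} overlapped by $[x_l, x_r]$; these are the nodes containing the endpoints $x_l$ and $x_r$. Only partially overlapped nodes recurse further. Hence at most four nodes are visited per level, and at most two fully contained (canonical) nodes occur per level. This gives $O(h)$ visited nodes and $O(h)$ canonical intervals in total. Each canonical node triggers one call to Algorithm~\ref{alg:insert} on a subtree of depth at most $h$, costing $O(\log C)$ by the previous paragraph. The total is $O(\log C)\cdot O(\log C) + O(\log C) = O(\log^2 C)$.

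The main obstacle is making the "at most two partial nodes per level" claim rigorous. I would argue it this way: any node strictly between the two endpoint-containing nodes at a given level is either fully inside or fully outside $[x_l, x_r]$, since intervals at one level are disjoint except at shared endpoints. A partial node must therefore contain $x_l$ or $x_r$ in its interior or at a boundary where it is cut. Shared midpoints in the closed-interval convention $[l,m]$, $[m,r]$ need care. I would resolve them by treating children as $[l,m]$ and $[m+1,r]$ in the integer case, or by noting that a boundary-touching node contributes only $O(1)$ extra nodes per level. Either way the per-level count stays $O(1)$.
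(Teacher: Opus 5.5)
Your proposal is correct and follows the same route as the paper. Insertion and query cost depth $h=O(\log C)$ times $O(1)$ work per node, and segment insertion costs $O(\log C)$ canonical intervals, each with one $O(\log C)$ call to Algorithm~\ref{alg:insert}; your depth lemma and two-partial-nodes-per-level count supply details the paper only asserts. Do commit to your first fix, the integer split $[l,m]$, $[m+1,r]$, rather than the ``$O(1)$ extra boundary nodes'' alternative: under the literal $[l,m]$, $[m,r]$ split with integer $m$, a node with $r-l=1$ has $m=l$ and is its own right child, so your induction on $r-l$ stalls, and because Algorithm~\ref{alg:segment-insert} has no depth cutoff, a node touching $x_l$ or $x_r$ only at an endpoint stays partial at every level, so a per-level count alone does not bound the number of levels.
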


\begin{proof}
The LICT is a binary tree of depth
\[
h = \lceil \log_2 C \rceil = O(\log C).
\]

For insertion: The new line follows exactly one root-to-leaf path. At each node, we perform: (1) $O(1)$ line evaluations ($kx + b$); (2) $O(1)$ comparisons; and (3) optionally a line swap. Each operation takes $O(1)$ time. With $O(\log C)$ nodes visited, total time is $O(\log C)$.

For query: The traversal follows one root-to-leaf path with $O(1)$ work per node (line evaluation and comparison). Total time is $O(\log C)$.

For segment insertion: The segment $[x_l, x_r]$ decomposes into $O(\log C)$ canonical intervals, each invoking Algorithm~\ref{alg:insert} at cost $O(\log C)$. The total time is $O(\log C) \times O(\log C) = O(\log^2 C)$.
\end{proof}

\begin{proposition}[Space Complexity]\label{prop:space-complexity}
The LICT stores at most $O(N)$ nodes in the worst case, where $N$ is the number of inserted lines.
\end{proposition}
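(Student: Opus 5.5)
The plan is a counting argument showing that every node corresponds injectively to an inserted line. The key observation is in Algorithm~\ref{alg:insert}: a node is created only in the branch where $node$ is null. In that branch the new node immediately receives a line, and routing terminates. So each call to Algorithm~\ref{alg:insert} creates at most one new node, namely at the end of the unique root-to-leaf path the routed line follows. Along the way, lines may be swapped between existing nodes, but no other allocation happens. Swaps and the max-depth drop never create nodes, and no operation ever deletes or empties a node.

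Step 1: I would formalize this as an invariant. After $N$ calls to Algorithm~\ref{alg:insert} starting from the empty tree (a single null root, per Definition~\ref{def:lict}), the number of allocated nodes is at most $N$. Moreover, every allocated node stores exactly one line. The proof is induction on the number of insertions. Inspecting the algorithm shows that one call allocates either zero nodes (the loser is dropped at maximum depth) or exactly one. Each node stores $O(1)$ data: one line and two child pointers. Hence the total space is $O(N)$ words.

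Step 2: I would address the line segment extension, because there the claim needs care. Algorithm~\ref{alg:segment-insert} recurses into $node.left$ and $node.right$ even when these are null. It then calls Algorithm~\ref{alg:insert} on up to $O(\log C)$ canonical nodes. Each such call can allocate a node, so one segment can create up to $O(\log C)$ nodes. Implementations also typically allocate the intermediate path nodes so that the canonical nodes are reachable. I would therefore state the bound as $O(N)$ for full lines and $O(N\log C)$ for segments. The latter follows from the same per-call counting applied to each of the $O(\log C)$ canonical insertions, together with $O(\log C)$ path nodes per segment.

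The main obstacle is simply reading the proposition's scope correctly. For full lines the argument is immediate. For segments the literal statement ``$O(N)$ nodes'' is false in the worst case, so I would restrict the proposition to line insertion and add the segment bound as a remark. A further subtlety worth one sentence is the dense alternative, where all $O(C)$ nodes are preallocated. That variant gives $\Theta(C)$ space, so the $O(N)$ bound specifically relies on the lazy child creation in Definition~\ref{def:lict}.
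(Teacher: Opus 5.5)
Your proposal is correct and follows the paper's argument. A node is allocated only in the null branch of Algorithm~\ref{alg:insert}, where routing terminates, so $N$ full-line insertions create at most $N$ nodes; the paper likewise restricts the $O(N)$ claim to full lines and gives $O(N\log C)$ for segments in a remark, and your additional accounting of intermediate path nodes in Algorithm~\ref{alg:segment-insert} and of dense preallocation is accurate but does not change the argument.
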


\begin{proof}
Each full-line insertion follows a single root-to-leaf path. At each existing node, the algorithm performs $O(1)$ comparisons and recurses into exactly one child. If the path reaches a null child, it creates one new node and returns; if the path terminates at an existing leaf, no new node is created. Therefore each insertion creates at most one new node. With $N$ insertions, the total number of nodes is at most $N$, giving $O(N)$ space.

\textbf{Remark.} For line segment insertion (Algorithm~\ref{alg:segment-insert}), each segment decomposes into $O(\log C)$ canonical intervals, creating up to $O(\log C)$ nodes per insertion and $O(N \log C)$ nodes overall.
\end{proof}

\textbf{Deletion.} The standard LICT does not support efficient deletion. Removing a line would require traversing all nodes where that line might be stored and recomputing optimal lines from descendants, which requires $\Omega(N)$ time in the worst case. For applications requiring deletion, an alternative approach is reconstructing the tree periodically.

The LICT's complexity depends on the coordinate range rather than the number of lines. This provides consistent performance regardless of hull size, though it cannot exploit cases where the hull remains small relative to the number of insertions.

\section{Benchmarks}\label{sec:benchmarks}

Reference implementations and benchmarks are available at \url{https://github.com/chnlich/lichao-tree}.

To validate the theoretical complexity analysis and characterize practical performance differences between the LICT and Dynamic CHT, we conducted systematic empirical evaluation across varying problem scales and input distributions.

\subsection{Experimental Setup}

All benchmarks were conducted with the following configuration.

\textbf{Hardware Specifications:}
\begin{itemize}
\item \textbf{CPU:} AMD Ryzen 9 3950X 16-Core Processor @ 3.50GHz (Zen 2 microarchitecture)
\item \textbf{Core Configuration:} 16 cores, 32 threads (SMT enabled)
\item \textbf{L1 Cache:} 512 KiB L1 data cache, 512 KiB L1 instruction cache
\item \textbf{L2 Cache:} 8 MiB
\item \textbf{L3 Cache:} 16 MiB (shared)
\item \textbf{Memory:} 64 GB DRAM
\item \textbf{Platform:} Linux x86\_64 (WSL2 virtualized)
\end{itemize}

\textbf{Software Configuration:}
\begin{itemize}
\item \textbf{Compiler:} g++ 11.4.0
\item \textbf{Optimization Flags:} \texttt{-O3 -std=c++17 -Wall -Wextra}
\item \textbf{Operating System:} Ubuntu 22.04.5 LTS (WSL2, kernel 6.6.87)
\end{itemize}

The Dynamic CHT baseline uses the KACTL \texttt{LineContainer}~\cite{kactl}, a \texttt{std::multiset}-based implementation widely adopted in competitive programming.

\textbf{Experimental Parameters:}
\begin{itemize}
\item \textbf{Test sizes:} $10^5$, $10^6$, and $10^7$ operations
\item \textbf{Coordinate range:}
\[
C = 10^9
\]
(assuming integer precision $\epsilon = 1$), with $x, k, b \in [-10^9, 10^9]$
\item \textbf{Random seed:} 42
\item \textbf{Measurement protocol:} Each test was run 10 times; reported times are the average. Variance was low ($<5\%$ coefficient of variation across runs). Benchmarks were run on an isolated system with no other user processes to minimize timing noise.
\item \textbf{Distributions:} We write
\[
X \sim U(a, b)
\]
to denote that random variable $X$ is drawn from a continuous uniform distribution over the interval $[a, b]$.
\begin{itemize}
\item \textbf{Random:} Slopes $k \sim U(-10^9, 10^9)$, intercepts $b \sim U(-10^9, 10^9)$. Expected hull size:
\[
\Theta(\log N).
\]
By point-line duality~\cite{deBerg2008}, each line $y = kx + b$ maps to the dual point $(k, b)$. The lower envelope size of the lines equals the lower convex hull size of the dual point set, which is uniformly distributed in the rectangle $[-10^9, 10^9]^2$. By the R\'{e}nyi and Sulanke theorem~\cite{renyi1963}, the expected number of convex hull vertices of $N$ uniform random points in a convex polygon is $\Theta(\log N)$.
\item \textbf{All on Hull:} Approximately $N/2$ lines $y = -(i+1)x + (i+1)^2$ for $i \in [0, \lfloor N/2 \rfloor - 1]$, interleaved with approximately $N/2$ queries. All inserted lines contribute to the hull, since the dual points $(-(i+1),\,(i+1)^2)$ lie on the convex parabola $b = k^2$, so all are vertices of the lower hull.
\end{itemize}
\end{itemize}

\subsection{Results}

Table~\ref{tab:results} presents the performance comparison.

\begin{table}[h]
\centering
\caption{Performance Comparison (Time in milliseconds)}
\label{tab:results}
\begin{tabular}{@{}cccccc@{}}
\toprule
\textbf{N} & \textbf{Distribution} & \textbf{Algorithm} & \textbf{Insert (ms)} & \textbf{Query (ms)} & \textbf{Total (ms)} \\
\midrule
$10^5$ & Random & LICT & 4.20 & 0.78 & 4.98 \\
$10^5$ & Random & Dynamic CHT & 2.21 & 0.29 & 2.50 \\
$10^5$ & All on Hull & LICT & 9.17 & 7.25 & 16.42 \\
$10^5$ & All on Hull & Dynamic CHT & 7.88 & 7.00 & 14.88 \\
\midrule
$10^6$ & Random & LICT & 41.63 & 7.84 & 49.48 \\
$10^6$ & Random & Dynamic CHT & 22.74 & 2.80 & 25.53 \\
$10^6$ & All on Hull & LICT & 245.17 & 248.77 & 493.94 \\
$10^6$ & All on Hull & Dynamic CHT & 294.86 & 304.05 & 598.91 \\
\midrule
$10^7$ & Random & LICT & 431.80 & 78.88 & 510.68 \\
$10^7$ & Random & Dynamic CHT & 226.64 & 33.48 & 260.12 \\
$10^7$ & All on Hull & LICT & 6643.19 & 7045.61 & 13688.80 \\
$10^7$ & All on Hull & Dynamic CHT & 7558.58 & 7417.51 & 14976.10 \\
\bottomrule
\end{tabular}
\end{table}

\subsection{Parameter-Matched Comparison: The $N = C$ Regime}

The theoretical complexity analysis reveals that Dynamic CHT achieves $O(\log N)$ time while LICT achieves $O(\log C)$. To directly compare these regimes, we conduct experiments where the number of lines $N$ equals the coordinate universe size $C$. This configuration mirrors dynamic programming optimizations common in competitive programming, where the state recurrence takes the form:
\[
dp[i] = \min_{0 \le j < i} \{ dp[j] + \text{cost}(j, i) \}
\]
Here, $\text{cost}(j, i)$ represents a linear function of $i$. Since the queries (indices $i$) are within a fixed range $[0, N]$, the coordinate universe size $C$ effectively equals $N$.

In this static universe setting, an iterative segment tree~\cite{zkw2010} implementation (often called ZKW segment tree) can be employed to eliminate recursion overhead and may improve cache locality due to its contiguous memory layout. We therefore include a third variant, \textbf{ZKW LICT}, in this comparison to evaluate the benefits of this specialized optimization against the standard LICT and Dynamic CHT.

\begin{itemize}
\item \textbf{Configuration:} $N = C = 10^5$, $N = C = 10^6$, and $N = C = 10^7$
\item \textbf{Lines (Random):} $k, b \sim U(-C/2, C/2)$
\item \textbf{Lines (All on Hull):} $y = -(i+1)x + (i+1)^2$ for $i \in [0, N/2 - 1]$, shuffled before insertion
\item \textbf{Query points:} $x \sim U(-C/2, C/2)$ (Random), $x \sim U(0, N)$ (All on Hull)
\item \textbf{Operation mix:} $N/2$ insertions followed by $N/2$ queries
\end{itemize}

\begin{table}[h]
\centering
\caption{Performance in $N = C$ Regime (Time in milliseconds)}
\label{tab:nc-regime}
\begin{tabular}{@{}cccccc@{}}
\toprule
\textbf{N = C} & \textbf{Distribution} & \textbf{Algorithm} & \textbf{Insert (ms)} & \textbf{Query (ms)} & \textbf{Total (ms)} \\
\midrule
$10^5$ & Random & LICT & 2.21 & 0.94 & 3.16 \\
$10^5$ & Random & ZKW LICT & 2.40 & 0.58 & 2.98 \\
$10^5$ & Random & Dynamic CHT & 2.23 & 0.41 & 2.64 \\
$10^5$ & All on Hull & LICT & 7.11 & 5.42 & 12.54 \\
$10^5$ & All on Hull & ZKW LICT & 5.32 & 1.70 & 7.02 \\
$10^5$ & All on Hull & Dynamic CHT & 7.17 & 6.83 & 14.00 \\
\midrule
$10^6$ & Random & LICT & 26.54 & 9.46 & 36.00 \\
$10^6$ & Random & ZKW LICT & 27.52 & 6.59 & 34.11 \\
$10^6$ & Random & Dynamic CHT & 23.57 & 3.77 & 27.33 \\
$10^6$ & All on Hull & LICT & 127.28 & 177.07 & 304.35 \\
$10^6$ & All on Hull & ZKW LICT & 71.79 & 41.68 & 113.47 \\
$10^6$ & All on Hull & Dynamic CHT & 187.77 & 260.54 & 448.31 \\
\midrule
$10^7$ & Random & LICT & 311.44 & 99.31 & 410.75 \\
$10^7$ & Random & ZKW LICT & 324.54 & 77.98 & 402.51 \\
$10^7$ & Random & Dynamic CHT & 235.24 & 45.01 & 280.25 \\
$10^7$ & All on Hull & LICT & 4022.59 & 5248.88 & 9271.48 \\
$10^7$ & All on Hull & ZKW LICT & 1685.15 & 1428.63 & 3113.78 \\
$10^7$ & All on Hull & Dynamic CHT & 6163.38 & 6579.40 & 12742.78 \\
\bottomrule
\end{tabular}
\end{table}

\subsection{Analysis}

When $C$ is larger than $N$, the LICT exhibits comparable performance to Dynamic CHT despite its theoretically larger $O(\log C)$ complexity. The absolute difference between $\log C$ and $\log N$ remains modest (e.g., $\log(10^9) \approx 30$ vs $\log(10^7) \approx 23$), so the LICT's smaller constant factor, arising from division-free comparisons and the absence of tree rebalancing, compensates for the additional tree depth.

In the $N = C$ regime, the ZKW LICT variant demonstrates substantial gains over both alternatives on dense hull inputs, with the $N = C = 10^7$ all-on-hull benchmark completed approximately $4\times$ faster than Dynamic CHT. 

The LICT's division-free operations ensure consistent numerical accuracy across all inputs. For applications involving floating-point coordinates or extreme value ranges, the LICT provides robustness guarantees that the Dynamic CHT lacks, by avoiding intersection point computation.

\section{Conclusion}\label{sec:discussion}

The LICT is the preferred choice in the following scenarios:

\subsection{Advantages}

\textbf{Line segment support required.} When the problem involves line segments (lines valid only on subranges) rather than infinite lines, the LICT provides natural $O(\log^2 C)$ insertion. The Dynamic CHT can support segments but requires significantly more complex machinery.

\textbf{Persistence required.} Path copying~\cite{driscoll1989} in the LICT is straightforward: insertions modify only nodes along a single root-to-leaf path, so copying those nodes creates a new version sharing unmodified subtrees with the previous version. Achieving persistence in the Dynamic CHT is substantially more complex due to the need to maintain hull invariants across versions.

\textbf{Near parallel lines.} When working on lines with subtle slope differences, the LICT's division-free operations avoid numerical instability.

\textbf{Implementation time constraints.} In settings such as competitive programming where implementation speed matters, the LICT's simplicity offers a clear advantage. The reduced code size and elimination of geometric corner cases allow for faster, more confident implementation.

\subsection{Limitations}

The LICT has several limitations that affect its applicability:

\textbf{No efficient deletion.} As noted above, the standard LICT does not support deletion of individual lines. Removing a line would require traversing all nodes where that line might be stored and recomputing optimal lines from descendants, requiring $\Omega(N)$ time in the worst case.

\textbf{Coordinate range dependency.} The LICT's complexity depends on $C$, the coordinate range divided by precision. For very large coordinate ranges with fine precision (e.g., 64-bit floating-point values spanning the entire representable range), $C$ can become impractically large.

\subsection{Future Work}

Several directions for future research and development remain:

\textbf{Deletion support.} Developing an efficient deletion mechanism for the LICT would extend its applicability to dynamic scenarios requiring removal of lines. Potential approaches include lazy deletion with periodic reconstruction or augmenting nodes with additional structure to support efficient line removal.

\textbf{Cache-efficient variants.} The LICT's pointer-based tree structure exhibits poor cache locality compared to array-based representations. Investigating cache-oblivious or cache-aware variants could improve practical performance without sacrificing asymptotic guarantees.

\textbf{Parallel implementations.} The LICT's tree structure naturally supports parallel queries, but insertions are inherently sequential. Developing concurrent LICT variants that support parallel insertions while maintaining correctness would benefit multi-core applications.

\textbf{Higher-dimensional extensions.} While the LICT extends naturally to higher dimensions (maintaining hyperplanes instead of lines), the space and time complexities grow exponentially with dimension. Investigating approximate variants or dimensionality reduction techniques could broaden applicability.

\section*{Acknowledgments}
This paper was prepared with the assistance of Claude Code by Anthropic and CharlieBot (\url{https://github.com/chnlich/charlie-bot}), a multi-session agent framework with integrated \TeX{} editing support.

The author thanks the following community resources for their role in disseminating the Li-Chao tree to a wider audience: CP-Algorithms (\url{https://cp-algorithms.com/geometry/li_chao_tree.html}), the Codeforces tutorial by I\_LOVE\_TIGER (\url{https://codeforces.com/blog/entry/51275}), the KTH Algorithm Competition Template Library (KACTL, \url{https://github.com/kth-competitive-programming/kactl}), and OI Wiki (\url{https://oi-wiki.org/ds/li-chao-tree/}).

\bibliographystyle{unsrtnat}
\bibliography{references}

@misc{lichao2012,
  author       = {Li, Chao},
  title        = {\begin{CJK}{UTF8}{gbsn}线段树\end{CJK} [Segment Trees]},
  howpublished = {Lecture slides, Hangzhou Xuejun High School, China},
  year         = {2012},
  note         = {Available at \url{https://github.com/chnlich/lichao-tree/blob/main/doc/slides.ppt}}
}

@article{overmars1981,
  author  = {Overmars, M.H. and van Leeuwen, J.},
  title   = {Maintenance of configurations in the plane},
  journal = {Journal of Computer and System Sciences},
  volume  = {23},
  number  = {2},
  pages   = {166--204},
  year    = {1981}
}

@Inbook{deBerg2008,
author="de Berg, Mark
and Cheong, Otfried
and van Kreveld, Marc
and Overmars, Mark",
title="Delaunay Triangulations",
bookTitle="Computational Geometry: Algorithms and Applications",
year="2008",
publisher="Springer Berlin Heidelberg",
address="Berlin, Heidelberg",
pages="191--218",
isbn="978-3-540-77974-2",
doi="10.1007/978-3-540-77974-2_9",
url="https://doi.org/10.1007/978-3-540-77974-2_9"
}

@article{driscoll1989,
  author  = {Driscoll, J.R. and Sarnak, N. and Sleator, D.D. and Tarjan, R.E.},
  title   = {Making data structures persistent},
  journal = {Journal of Computer and System Sciences},
  volume  = {38},
  number  = {1},
  pages   = {86--124},
  year    = {1989}
}

@misc{zkw2010,
  author       = {Zhang, Kunwei},
  title        = {\begin{CJK}{UTF8}{gbsn}统计的力量\end{CJK} [The Power of Statistics]},
  howpublished = {Lecture slides, Tsinghua University, China},
  year         = {2010},
  note         = {Introduced the non-recursive (iterative) segment tree, widely known as the ZKW segment tree in the Chinese competitive programming community}
}

@misc{cpalgorithms_cht,
  title        = {Convex Hull Trick and Li Chao Tree},
  howpublished = {CP-Algorithms},
  url          = {https://cp-algorithms.com/geometry/convex_hull_trick.html},
  year         = {2024}
}

@misc{kactl,
  title        = {{KTH Algorithm Competition Template Library (KACTL)}},
  howpublished = {GitHub},
  url          = {https://github.com/kth-competitive-programming/kactl},
  year         = {2024}
}

@article{renyi1963,
  author  = {R{\'{e}}nyi, A. and Sulanke, R.},
  title   = {\"{U}ber die konvexe {H}\"{u}lle von $n$ zuf\"{a}llig gew\"{a}hlten {P}unkten},
  journal = {Zeitschrift f\"{u}r Wahrscheinlichkeitstheorie und verwandte Gebiete},
  volume  = {2},
  number  = {1},
  pages   = {75--84},
  year    = {1963}
}

\end{document}